\title{Clearing an Orthogonal Polygon Using Sliding Robots}
\author[1,2]{Mohammad Ghodsi\thanks{This author's work was partially supported by IPM under grant no. CS-1392-2-01}}
\author[3]{Salma Sadat Mahdavi}
\author[4] {Ali Narenji Sheshkalani}
\affil[1]{ Computer Engineering Department, Sharif University of Technology,  Tehran, Iran\\ 
  \texttt{ghodsi@sharif.ir}}
\affil[2]{ Institute for Research in Fundamental Sciences (IPM), Tehran, Iran}
\affil[3]{ Computer Engineering Department, Sharif University of Technology, Tehran, Iran\\
  \texttt{ss.mahdavi110@gmail.com}}
\affil[4]{School of Electrical and Computer Engineering, University of Tehran, Tehran, Iran\\
  \texttt{narenji@ut.ac.ir}}
\authorrunning{Mohammad Ghodsi, Salma Sadat Mahdavi and Ali Narenji Sheshkalani} 
\subjclass{I.3.5 Computational Geometry and Object Modeling }
\keywords{Computational Geometry, Motion Planning, Pursuit Evasion, Multi-Robot Systems, Sliding Robots}
\begin{document}

\maketitle

\begin{abstract}
In a multi-robot system, a number of autonomous robots would sense, communicate, and decide to move within a given domain to achieve a common goal.
In this paper, we consider a new variant of the pursuit-evasion problem in which the robots (pursuers) each move back and forth along an orthogonal line segment inside a simple orthogonal polygon $P$.  A point $p$ can be covered by a sliding robot that moves along a line segment $s$, if there exists a point $q\in s$ such that $\overline{pq}$ is a line segment perpendicular to $s$. 
In the pursuit-evasion problem, a polygonal region is given and a robot called a pursuer tries to find some mobile targets called evaders. The goal of this problem is to design a motion strategy for the pursuer such that it can detect all the evaders. 
We assume that $P$ includes unpredictable, moving evaders that have unbounded speed. We propose a motion-planning algorithm for a group of sliding robots, assuming that they move along the pre-located line segments with a constant speed to detect all the evaders with unbounded speed.
\end{abstract}

\section{Introduction}
The mathematical study of the ``pursuit-evasion'' problem was first considered by Parson \cite{parsons1978pursuit}. After that, the watchman route problem was introduced as a variation of the art gallery problem, which consists of finding static evaders in a polygon. 
The visibility-based motion-planning problem was introduced in 1997 by Lavalle et al. \cite{lavalle1997finding}. The aim was to coordinate the motions of one or more robots (pursuers) that have omnidirectional vision sensors to enable them to eventually ``see'' an evader that is unpredictable, has an unknown initial position, and is capable of moving arbitrarily fast. The process of detecting all evaders is also known as clearing the polygon. The pursuit-evasion problem has a broad range of applications such as air traffic control, military strategy, and trajectory tracking \cite{lavalle1997finding}.

In 2011, Katz and Morgenstern introduced sliding camera guards for guarding orthogonal polygons  \cite{katz2011guarding}. We define the ``sliding robots'' to be the same as the sliding cameras, where the robot $r_i$ would travel back and forth along an axis-aligned segment $s$ inside an orthogonal polygon $P$. A point $p$ is seen by $s_i$ if there exists a point $q\in s_i$ such that $\overline{pq}$ is a line segment perpendicular  to $s_i$ and is completely inside $P$. 
The set of all points of $P$ that can be seen by $s_i$ is its sliding visibility polygon (see Fig.\ref{SC}). The point $p$ is seen by $r_i$ if $r_i=q$.

\begin{figure}[h]
\centering
\includegraphics[height=1.4in]{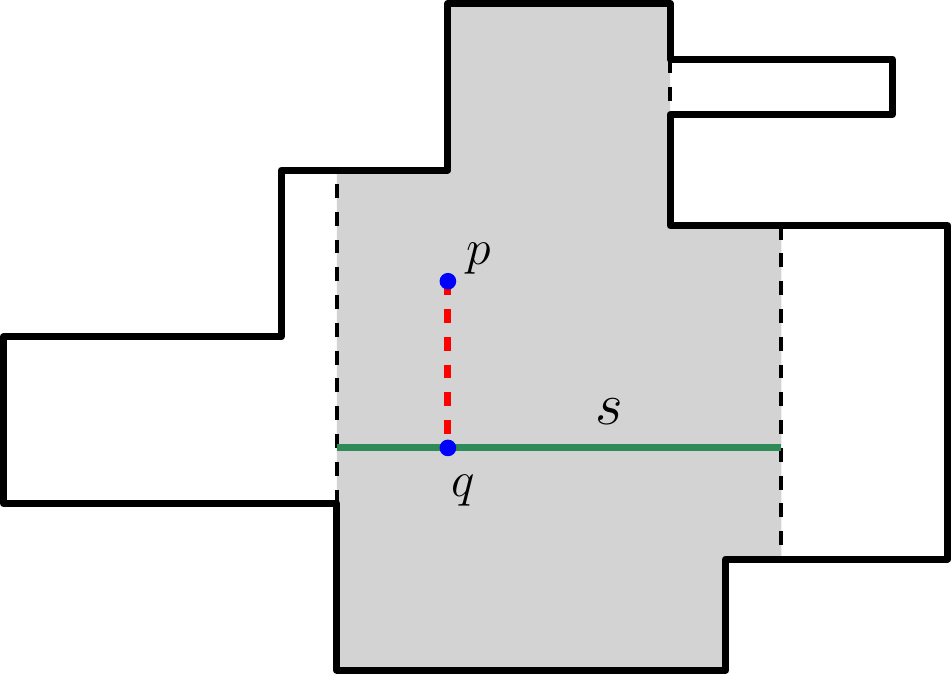}
 \caption{\small{The shaded area shows the sliding visibility polygon of $s$.}}
  \label{SC}
\end{figure} 
According to the visibility-based motion-planning problem and the sliding robots, we study the new version of planning the motions for a group of robots for clearing an orthogonal polygon when robots are modeled as sliding cameras. 
The given orthogonal polygon $P$ has unpredictable, moving evaders with unbounded speed. Motion planning for a group of sliding robots to clear $P$ means presenting the sequence of motions for the sliding robots such that any evader is viewed by at least one robot. Moreover, a set of line segments, $S$, is given such that the union of their sliding visibility polygons is $P$.

\section*{Previous Works}
Generally, in the pursuit-evasion problem, the pursuer is considered as an $l$-searcher with $l$ flashlights and rotates them continuously with a bounded angular rotation speed \cite{suzuki1992searching}. Thus, an $\infty$-searcher (also known as an omnidirectional searcher) is a mobile robot equipped with a $360^\circ$ view sensor for detecting evaders. Lavalle et al. proposed the first algorithm for solving the pursuit-evasion problem for an $l$-searcher \cite{lavalle1997finding}. They decomposed $P$ into cells based on visibility properties and converted the problem to a search on an exponential-sized information graph.
Durham et al. \cite{durham:2012} addressed the problem of coordinating a team of mobile robots with limited sensing and communication capabilities to detect any evaders in an unknown and multiply connected planar environment. They proposed an algorithm that guarantees the detection of evaders by maintaining a complete coverage of the frontier between cleared and contaminated regions while expanding the cleared region.

The art gallery problem is a classical and old problem in computational geometry. Over the years, many variants of this problem have been studied \cite{o1987art,urrutia2000art,hoffmann1990rectilinear,schuchardt1995two}. Most of these have been proved to be NP-hard \cite{lee1986computational}, containing the problem when the target region is a simple orthogonal polygon, and the goal is to find the minimum number of vertex guards to guard the entire polygon (e.g., \cite{o1987art,schuchardt1995two}). Some types of them, which consider the limited model of visibility, use polynomial time algorithms \cite{motwani1988covering,worman2007polygon}.

The study of the art gallery problem based on the sliding camera was started in 2011 by Katz and Morgenstern \cite{katz2011guarding}. They studied the problem of guarding a simple orthogonal polygon using minimum-cardinality sliding cameras (MCSC). They showed that, when the cameras are constrained to travel only vertically inside the polygon, the MCSC problem can be solved in polynomial time. They also presented a two-approximation algorithm for this problem when the trajectories that the cameras travel can be vertical or horizontal and the target region is an $x$-monotone orthogonal polygon. They left the computation of the complexity of the MCSC problem as an open problem.
In 2013, Durocher and Mehrabi \cite{durocher2013guarding} studied these two problems: the MCSC problem and the minimum-length sliding camera (MLSC) problem, where the goal was to minimize the total length of the trajectories along which the cameras travel. They proved that the MCSC problem is NP-hard, where the orthogonal polygon can have holes. They also proved that the MLSC problem is solvable in polynomial time even for orthogonal polygons with holes.
In 2014, Durocher {\it et al.} \cite{mehrabi20147} presented an $\mathcal{O}(n^{2.5})$-time $(7/2)$-approximation algorithm for solving the MCSC problem in simple orthogonal polygons.  In 2014, De Berg {\it et al.} \cite{de2014guarding} presented a linear-time algorithm for solving the MCSC problem in an $x$-monotone orthogonal polygon. The complexity of this problem remains as an open problem.
\section*{Our Result}

Our aim is to plan the motions for a group of robots that move along the line segments of $S$ and find all unpredictable evaders such that the number of robots used is the cardinality of $S$. Owing to the difficulty of having multiple cooperating robots executing common tasks, we store some information (e.g., the status of some nearby regions that shows whether the regions have been cleared by some robots) on each reflex vertex. 
\\
We assume that the robots have the map of the environment and that they are capable of broadcasting a message (e.g., a region that is supposed to get cleared) to all the other robots by sending signals. This way, the robots can have some communications with each other to maintain the coordination process. 
\\
The best result of our algorithm is that, if $S$ is a set of MCSCs that guard the whole $P$, then our algorithm will detect all evaders with the \textit{ \textbf{minimum number of sliding robots}}. 

\section{Preliminaries and Notations}
\label{PandN}
Let $P$ be an orthogonal polygon and $V(P)=\{v_1,v_2,..., v_n\}$ be the set of all vertices of $P$ in counterclockwise order. We consider $V_{ref}(P)$ to be all of the reflex vertices of $P$ and assume a general position such that no four reflex vertices are collinear. Suppose that $P_1$ is a sub-polygon of $P$ whose boundary is from $a$ to $b$ ($a$ and $b$ are points on the boundary of $P$) in counterclockwise order. Then, we show $P_1$ by $(a,b)$.

Let $v_j$ be a reflex vertex of $P$. $v_j$ has two edges, $e_{j-1}=\overline{v_{j-1},v_j}$ and $e_j=\overline{v_j,v_{j+1}}$, that can be extended inwardly until they reach the boundary of $P$. We call these extensions as the windows of $v_j$ and show them as $win_j(j-1,j)=\overline{v_j x_j}$ and $win_j(j+1,j)=\overline{v_j y_j}$, respectively ($x_j$ and $y_j$ are two points on the boundary).  $win_j(j-1,j)=\overline{v_j x_j}$ and $win_j(j+1,j)=\overline{v_j y_j}$ are two line segments whose endpoints are on the boundary of $P$. $win_j(j-1,j)$ partitions $P$ into two sub-polygons. 
Let $P_j(j-1,j)$ be a sub-polygon that consists of $v_{j+1}$, and let $P'_j(j-1,j)$ be $P\setminus P_j(j-1,j)$.  Therefore, $P_j(j-1,j)$ and $P'_j(j-1,j)$ are $(v_j,x_j)$ and $(x_j,v_j)$, respectively.

Similarly,  let $P_j(j+1,j)$ be a sub-polygon that is separated from $P$ by $win_j(j+1,j)$ and consists of $v_{j-1}$, and let $P'_j(j+1,j)$ be a sub-polygon that includes $v_{j+1}$. Therefore, $P_j(j+1,j)$ and $P'_j(j+1,j)$ are $(y_j,v_j)$ and $(v_j,y_j)$, respectively.  Let $L$ be the set of all windows of $P$. $L$ partitions $P$ into orthogonal rectangles. 

\begin{figure}[h]
\centering
\includegraphics[height=1.7in]{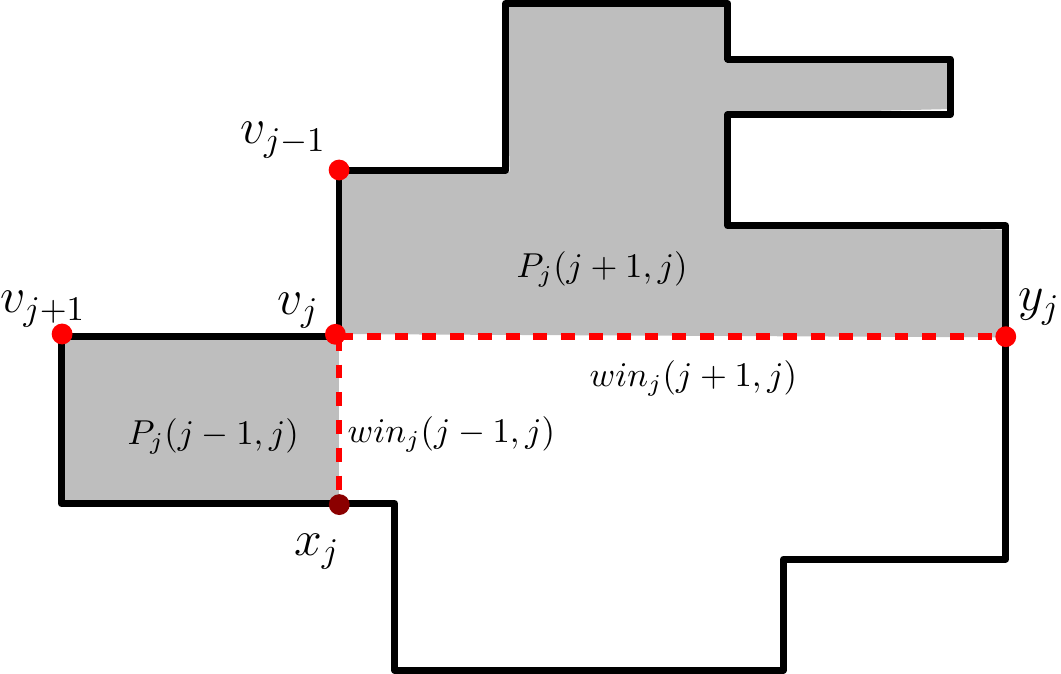}
 \caption{\small{Shown are the windows and the sub-polygons of $v_j$.}}
  \label{window}
\end{figure}

\section{The Proposed Algorithm}
\label{algo}
In this section, we present an algorithm for solving the pursuit-evasion problem using sliding robots. Assume that an orthogonal polygon $P$ and a set of orthogonal line segments $S=\{s_1,s_2,\ldots ,s_k\}$ are given. We present a path-planning algorithm for finding the unpredictable evaders using a set of sliding robots $R=\{r_1,r_2,\ldots ,r_k\}$ in which $r_i$ can move along the line segment $s_i$. 
\\
To distribute the movements of the robots, we define the ``event points'' as below:
\begin{definition}
\label{eventpoint}
An \textbf{event point} happens when $r_i$ sees a reflex vertex, sees a waiting sliding robot, or reaches an endpoint of $s_i$.
\end{definition}
\section*{Overview of the Algorithm}
Our algorithm has five steps. The ``start step,'' the ``decision step,'' the ``sending a signal,'' the ``move back-and-update step,'' and the ``termination step.''
%
\\
To present our path-planning method, we start with an arbitrary sliding robot $r_i\in R$, which is on $s_i\in S$ (start step). $r_i$ starts moving from one endpoint of $s_i$. When $r_i$ reaches an event point, it updates the cleared sub-polygons. By the time that $D_i(2)$ becomes empty and $D_i(1)\neq \emptyset$, $r_i$ moves back along $s_i$ (move back-and-update step). Moreover, at each event point, $r_i$ stops and, according to the cleared sub-polygons of $P$, decides to continue its movement or send a signal to the other robots to clear a specific sub-polygon of $P$ (decision step).
When $r_i$ sends a signal to the other robots to clear a sub-polygon $P_1$, a robot that can clear some parts of $P_1$ starts moving along its corresponding line segment (sending a signal step). When all parts of $P$ become cleared, the algorithm is finished (termination step).
\section*{Details of the Algorithm}
Now, we explain the steps of the algorithm in detail. We store the status of the regions in their corresponding reflex vertices, which are updated by the robots during the movements to keep track of the contaminated regions, which is helpful in the decision-making process.

For each $v_j\in V_{ref}(P)$, we store an array called $FF_j(i)$ ($1\leq i \leq 4$) of size four in which the cells (of type Boolean) indicate whether the sub-polygons $P_j(j-1,j)$, $P_j(j+1,j)$, $P'_j(j-1,j)$, and $P'_j(j+1,j)$ are cleared (true), respectively. Initially, we assume that all parts of $P$ are contaminated;  therefore, $\forall 1\leq i\leq 4$, $FF_j(i)=false$. 

For each $r_i\in R$, we consider a triple storage, which is called  $D_i(j)1 \leq j \leq 3$. Each storage includes an interval such as $(a,b)$, which indicates the boundary of $P$ between $a$ and $b$ in counterclockwise order.
The first storage, $D_i(1)$, indicates the cleared sub-polygon of $P$ by $r_i$ (partly or completely). The second storage, $D_i(2)$, indicates the sub-polygon of $P$ that should be cleared by $r_i$ (partly or completely). The third storage, $D_i(3)$, specifies the sub-polygon that should be cleared until $r_i$ continues its movement. In the case where $r_i$ is waiting, $D_i(3)$ is not empty. Initially, for each $r_i\in R$, $D_i(1)= D_i(2)=D_i(3)=\emptyset$.
\\
\\
\textbf{Start Step}\\
As mentioned earlier, we start with one of the endpoints of an arbitrary $s_i$ ($r_i$ can move along $s_i$). 
\begin{itemize}
\item If $r_i$ starts from an endpoint that is on the boundary, $r_i$ can see two consecutive vertices (suppose the endpoint is on the edge $e_k=\overline{v_k v_{k+1}}$). 
	\begin{enumerate}
		\item If $v_k$ and $v_{k+1}$ are convex, then $r_i$ starts clearing $P$ by its movement and updates $D_i(1)=(v_k, v_{k+1})$. $r_i$ continues its movement along $s_i$ until an event point happens. At these times, $r_i$ stops, updates $D_i(1)$ and $D_i(2)$, and makes a decision for its movement (decision step). 
		\item If at least one of $v_k$ or $v_{k+1}$ is a reflex vertex, then $r_i$ cannot start clearing $P$; it therefore stops and waits on the endpoint to make a decision (decision step).
	\end{enumerate} 
\item If $r_i$  wants to start from an endpoint that is not on the boundary, then $r_i$ cannot start clearing $P$; it therefore stops and waits on the endpoint (decision step). Suppose that the maximal normal line segment to $s_i$ that passes through $r_i$ is $lr$. Let $x$ and $w$ be the first intersection of $lr$ at the boundary of two sides. $s_i$ can be inside the sub-polygon corresponding to $(x,w)$ or $(w,x)$. Assume that $s_i$ is inside $(w,x)$. Therefore, $r_i$ sends a signal to the other robots to clear $(x,w)$, and $D_i(3)=(x,w)$ (sending a signal step). As shown in Fig.\ref{upep}, if $r_i$ wants to start from $z$, it stops and sends a signal to the other robots to clear the sub-polygon corresponding to $(x,w)$.
\end{itemize}
\begin{figure}[h]
\centering
\includegraphics[height=2.2in]{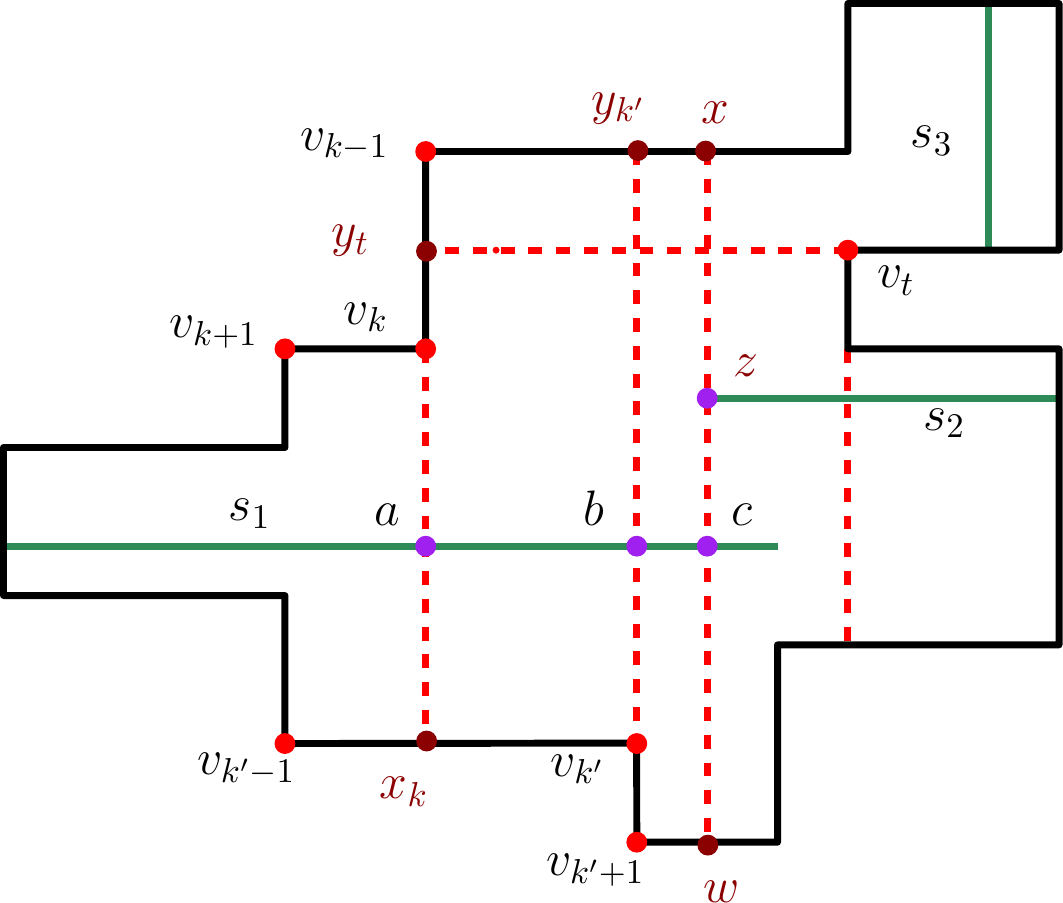}
 \caption{\small{$r_1, r_2$, and $r_3$ moving along $s_1, s_2$, and $s_3$, respectively.}}
  \label{upep}
\end{figure}
\textbf{Move Back-and-Update Step}\\
Assume that $r_i$ moves along $s_i$. When an event point happens, 
$r_i$ updates $D_i(1)$ (increases the cleared region) and $D_i(2)$ (decreases the sub-polygon that should be cleared). At each time that $D_i(2)$ becomes empty (and $D_i(1)\neq \emptyset$), $r_i$ moves back along $s_i$.  It moves back until it sees a waiting robot or reaches an endpoint of $s_i$. When $r_i$ sees a reflex vertex $v_k$ during its movement, it updates $FF_k(j)$ for $1\leq j \leq 4$ as detailed below:

\begin{itemize}
\item If the endpoints of $win_k(k-1,k)$ and $v_{k+1}$ are inside the sub-polygon indicated by $D_i(1)$  (if $D_i(1)=(v_k,x_k)$, then $v_{k+1}\in D_i(1)$), then the sub-polygon $P_k(k-1,k)$ is cleared and $r_i$ updates $FF_k(1)=true$ (see Fig.\ref{upep}, when $r_1$ moves back from left to right and reaches $a$).

\item If $D_i(1)=(x_k,v_k)$, then $r_i$ updates $FF_k(3)=true$ (see Fig.\ref{upep}, when $r_1$ moves back from right to left and reaches $a$).

\item If $D_i(1)=(y_k,v_k)$, then  $r_i$ updates  $FF_k(2)=true$ (see Fig.\ref{upep}, when $r_1$ moves back from left to right and reaches $b$).


\item If $D_i(1)=(v_k,y_k)$, then $r_i$ updates $FF_k(4)=true$ (see Fig.\ref{upep}, when $r_1$ moves back from right to left and reaches $b$).
\end{itemize}

As we explained earlier, $r_i$ moves back until it finishes clearing ($D_i(2)=\emptyset$). While it is moving back, if $r_i$ sees its corresponding waiting robot (supposedly $r_j$) and $D_i(1)=D_j(3)$, then $D_i(2)=\emptyset$. Therefore, $r_i$ updates $D_j(3)=\emptyset$, $D_j(1)=D_j(1)\bigcup D_i(1)$, and $D_j(2)=D_j(2)\bigcup D_i(1)$. Since $D_i(2)$ is empty, $r_i$ finishes its clearing and $r_j$ starts moving back (see Fig.\ref{upep}; when $r_1$ moves back from left to right and reaches $c$, it updates the information of $r_2$, and $r_2$ moves back).  $r_j$ can be collinear with the endpoint of $s_i$. Moreover, if $r_i$ sees any reflex vertex $v_k$, $r_i$ updates $FF_k(j)$ for $1\leq j \leq 4$ as explained above and continues moving back.
\\
\\
\textbf{Decision Step}\\
When $r_i$ stops and waits, it makes a decision and performs the following:
\begin{enumerate}
\item If $r_i$ is on the endpoint of $s_i$ (let $ep$ be the endpoint), then 
	\begin{enumerate}
	\item If $ep$ is on the boundary of $P$ \big(on the edge $(e_k=\overline{v_k v_{k+1}})$\big), then
		\begin{itemize}
		\item If $v_k\in V_{ref}(P)$ and $P_k(k+1,k)$ is contaminated ($FF_{k}(2)=false$), then  $P_k(k+1,k)$ should be cleared. Therefore, $r_i$ sends a signal to the other robots to clear $P_k(k+1,k)$ and updates $D_i(3)=(y_k,v_k)$ (As mentioned in Section \ref{PandN}, $y_k$ and $v_k$ are two endpoints of $win_k(k+1,k)$, and since $P_k(k+1,k)$ includes $v_{k-1}$, $D_i(3)$ is from $y_k$ until $v_k$ in counterclockwise order). \\For an example, see Fig.\ref{p1}; assume that $r_3$ or $r_2$ is on the blue point of $s_3$ and $s_2$, respectively. 
		\item Else if $P_k(k+1,k)$ is cleared ($FF_{k}(2)=true$), then $D_i(1)=D_i(1)\cup (y_k,v_k)$ and $D_i(2)=D_i(2) \setminus (y_k,v_k)$ .

		\item  If $v_{k+1}\in V_{ref}(P)$ and $P_{k+1}(k,k+1)$ is contaminated ($FF_{k+1}(1)=false$), then $P_{k+1}(k,k+1)$ should be cleared.  Therefore, $r_i$ sends a signal to the other robots to clear $P_{k+1}(k,k+1)$ and updates $D_i(3)=(v_{k+1},x_{k+1})$. \\For an example, see Fig.\ref{p1}; assume that $r_1$ or $r_2$ is on the blue point of $s_1$ and $s_2$, respectively. 
		\item  Else if $P_{k+1}(k,k+1)$ is cleared ($FF_{k+1}(1)=true$), then $D_i(1)=D_i(1)\cup (v_{k+1},x_{k+1})$ and $D_i(2)=D_i(2) \setminus (v_{k+1},x_{k+1})$.	

\begin{figure}[h]
\centering
\includegraphics[height=2.2in]{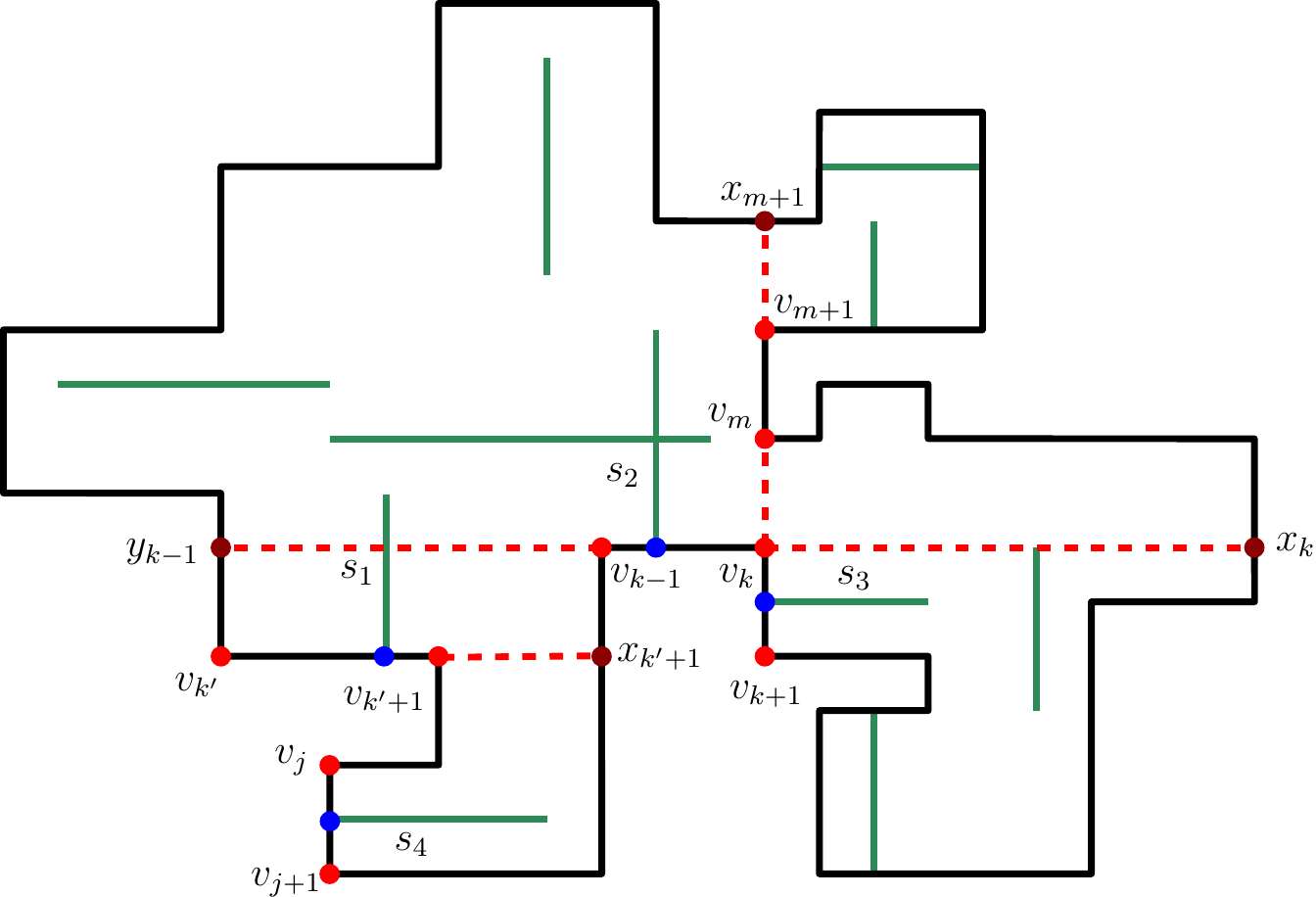}
 \caption{\small{$r_1, r_3$, and $r_4$ moving along $s_1, s_3$, and $s_4$, respectively}}
  \label{p1}
\end{figure}
     	 	  \item If at least one of $v_k$ and $v_{k+1}$ is a reflex vertex, then $ep$ is on $l(j)\in L$.  If $l(j)$ includes two consecutive reflex vertices $v_m, v_{m+1}$, where $m\neq k$ (suppose that the nearest one to $r_i$ is $v_m$), then \\For an example, See Fig.\ref{p1}; assume that $r_3$ is on the blue point of $s_3$. 
			  \begin{enumerate}
			  \item If $P_{m+1}(m,m+1)$ is contaminated ($FF_{m+1}(1)=false$), then $r_i$ sends a signal to the other robots to clear $P_{m+1}(m,m+1)$ and updates $D_i(3)=(v_{m+1},x_{m+1})$
			  \item Else ($FF_{m+1}(1)=true$), $D_i(1)=D_i(1)\cup (v_{m+1},x_{m+1})$ and $D_i(2)=D_i(2) \setminus (v_{m+1},x_{m+1})$ .	
		    	  \end{enumerate}
		  \item If $v_k$ and $v_{k+1}$ are convex, then 
		  	\begin{enumerate}
			\item If $r_i$ wants to start moving from $ep$ (if $D_i(1)=\emptyset$), then $r_i$ updates $D_i(1)=(v_k,v_{k+1})$ and starts moving along $s_i$.
			\item If $r_i$ reaches the endpoint of $s_i$ (if $D_i(1\neq \emptyset$), then $D_i(2)$ is $\emptyset$ and $r_i$ moves back.
			\end{enumerate}
		  \end{itemize}
	\item  If $ep$ is not on the boundary of $P$ and $ep$ is collinear by at least one reflex vertex, then $ep$ is on $l(j)\in L$. Therefore,

		\begin{itemize}
		\item If $l(j)$ consists of one reflex vertex $v_k$ (assume that the consecutive vertex of $v_k$ on $l(j)$ is $v_{k+1}$) and $s_i$ is inside $P_k(k+1,k)$, then \\For an example, see Fig.\ref{p4}; assume that $r_5$ is on the blue point of $s_5$. 

			\begin{enumerate}
			\item If $P'_k(k+1,k)$ is contaminated ($FF_k(4)=false$), then $r_i$ sends a signal to the other robots to clear $P'_k(k+1,k)$ and updates $D_i(3)=(v_k,y_k)$.
			\item Else ($FF_k(4)=true$), $D_i(1)=D_i(1)\cup (v_k,y_k)$ and $D_i(2)=D_i(2) \setminus (v_k,y_k)$.
			\end{enumerate} 
		\item Else if $s_i$ is inside $P'_k(k+1,k)$, then
   		        \begin{enumerate}
			\item If $P_k(k+1,k)$ is contaminated ($FF_k(2)=false$), then $r_i$ sends a signal to the other robots to clear $P_k(k+1,k)$ and updates $D_i(3)=(y_k,v_k)$.
			\item Else ($FF_k(2)=true$), $D_i(1)=D_i(1)\cup (y_k,v_k)$ and $D_i(2)=D_i(2) \setminus (y_k,v_k)$.
			\end{enumerate}
		\item If $l(j)$ consists of two consecutive reflex vertices $v_k$ and $v_{k+1}$ (suppose that the nearest one to $ep$ is $v_k$, and $s_i$ is inside $P_{k}(k+1,k)$, then \\For an example, see Fig.\ref{p4}; assume that $r_6$ is on the blue point of $s_6$.
			\begin{enumerate}
			\item If $P_{k+1}(k,k+1)$ is contaminated ($FF_{k+1}(1)=false$), then  $r_i$ sends a signal to the other robots to clear it ($P_{k+1}(k,k+1)$) and updates $D_i(3)=(v_{k+1},x_{k+1})$.
			\item Else ($P_{k+1}(k,k+1)$ is cleared ($FF_{k+1}(1)=true$)) $r_i$ sends a signal to the other robots to clear $P'_{k+1}(k,k+1) \bigcap P'_k(k+1,k)$ and updates $D_i(3)=(x_{k+1},y_k)$.
			\end{enumerate}
		\item Else if $s_i$ is inside $P'_{k}(k+1,k)$, then
			\begin{enumerate}
			\item If $P_{k+1}(k,k+1)$ is contaminated ($FF_{k+1}(1)=false$), then $r_i$ sends a signal to the other robots to clear it ($P_{k+1}(k,k+1)$) and updates $D_i(3)=(v_{k+1},x_{k+1})$.
			\item If $P_{k}(k+1,k)$ is contaminated ($FF_{k}(2)=false$), then  $r_i$ sends a signal to the other robots to clear it and updates $D_i(3)=(y_k,v_k)$.
			\item If $P_{k+1}(k,k+1)$ and $P_{k}(k+1,k)$ are cleared ($FF_{k+1}(1)=true$ and $FF_{k}(2)=true$), then $D_i(1)=D_i(1)\cup (y_k,x_{k+1})$ and $D_i(2)= D_i(2)\setminus (y_k,x_{k+1})$.
			\end{enumerate}
		\end{itemize}

\begin{figure}[h]
\centering
\includegraphics[height=2.2in]{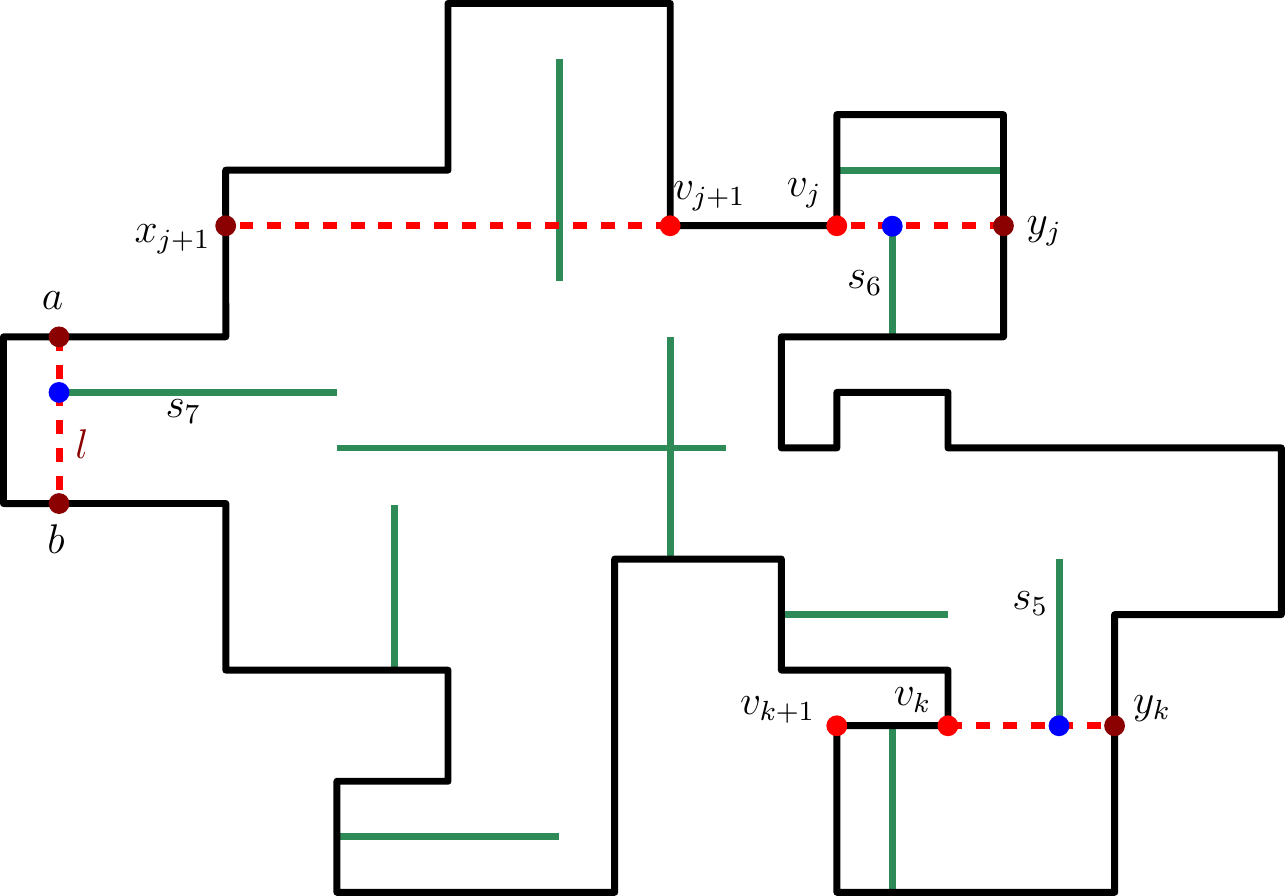}
 \caption{\small{$r_5, r_6$, and $r_7$ moving along $s_5, s_6$, and $s_7$, respectively}}
  \label{p4}
\end{figure}

	\item  If $ep$ is not on the boundary of $P$ and $ep$ is not collinear by any reflex vertex, then suppose that the maximal orthogonal line segment normal to $s_i$ at $ep$ is $l$ and let $a$ and $b$ be two endpoints of $l$. $l$ partitions $P$ into two sub-polygons. One of them consists of $s_i$. Therefore, $r_i$ sends a signal to the other robots to clear the sub-polygon that does not include $s_i$ and that is between $a$ and $b$ $\big($ $r_i$ updates $D_i(3)$ depending on its position to $D_i(3)=(a,b)$ or $D_i(3)=(b,a)$$\big)$.  \\For an example, see Fig.\ref{p4}; if $r_7$ is on the blue point of $s_7$, then the sub-polygon that is between $(a,b)$ in counterclockwise order should be cleared.

	\end{enumerate}
\item Else if $r_i$ sees at least one reflex vertex ($r_i$ is on $l(j)\in L$), then 
	\begin{enumerate}
	\item If there are no two consecutive reflex vertices on $l(j)$, then $r_i$ continues its movement along $s_i$.
	\item If there are two consecutive reflex vertices $v_k, v_{k+1}$ on $l(j)$ (suppose that the nearest one to $r_i$ is $v_k$), then $r_i$ decides as below:  \\For an example, see Fig.\ref{p2}; assume that $r_1$ is on the point $p$  of $s_1$.
		\begin{itemize}
		\item If $P_{k+1}(k,k+1)$ is cleared ($FF_{k+1}(1)=true$), then $r_i$ updates $D_i(1)=D_i(1)\bigcup (v_{k+1},x_{k+1})$ and $D_i(2)= D_i(2)\setminus (v_{k+1},x_{k+1})$, and then continues its movement along $s_i$. 
		\item If $P_{k+1}(k,k+1)$ is contaminated ($FF_{k+1}(1)=false$), then $P_{k+1}(k,k+1)$ should be cleared. Therefore, $r_i$ waits and sends a signal to the other robots to clear $P_{k+1}(k,k+1)$ and updates $D_i(3)=(v_{k+1},x_{k+1})$.
		\end{itemize}
	\end{enumerate}

\end{enumerate}
\vspace{0.5cm}
\textbf{Waiting and Sending a Signal Step}\\
Assume that $r_i$ waits and sends a signal to the other robots to clear sub-polygon $P_1$, which is between $a$ and $b$ in counterclockwise order ($D_i(3)=(a,b)$). 

When $r_i$ sends a signal, a robot that can clear some portions of $P_1$ consisting of $a$ starts clearing. 
At each time, one robot is clearing. Suppose that $r_j$ sees $a$ and can start clearing $P_1$. Therefore, $r_j$ updates $D_j(2)=D_i(3)$. 
\\
If $r_j$ is outside of $P_1$, $r_j$ starts clearing from $a$ and $D_j(1)$  is the intersection of the boundary of $P_1$ and the orthogonal line segment that passes through $a$ and intersects $s_j$. Therefore, $r_j$ starts its movement (see Fig.\ref{p2}). Otherwise ($r_j$ is inside $P_1$), $r_j$ starts clearing from one of its endpoints (for an example, see Fig.\ref{p4}; if $r_5$ is on the blue point of $s_5$,  $P_{k+1}(k,k+1)$ should be cleared). 
\begin{figure}[h]
\centering
\includegraphics[height=2in]{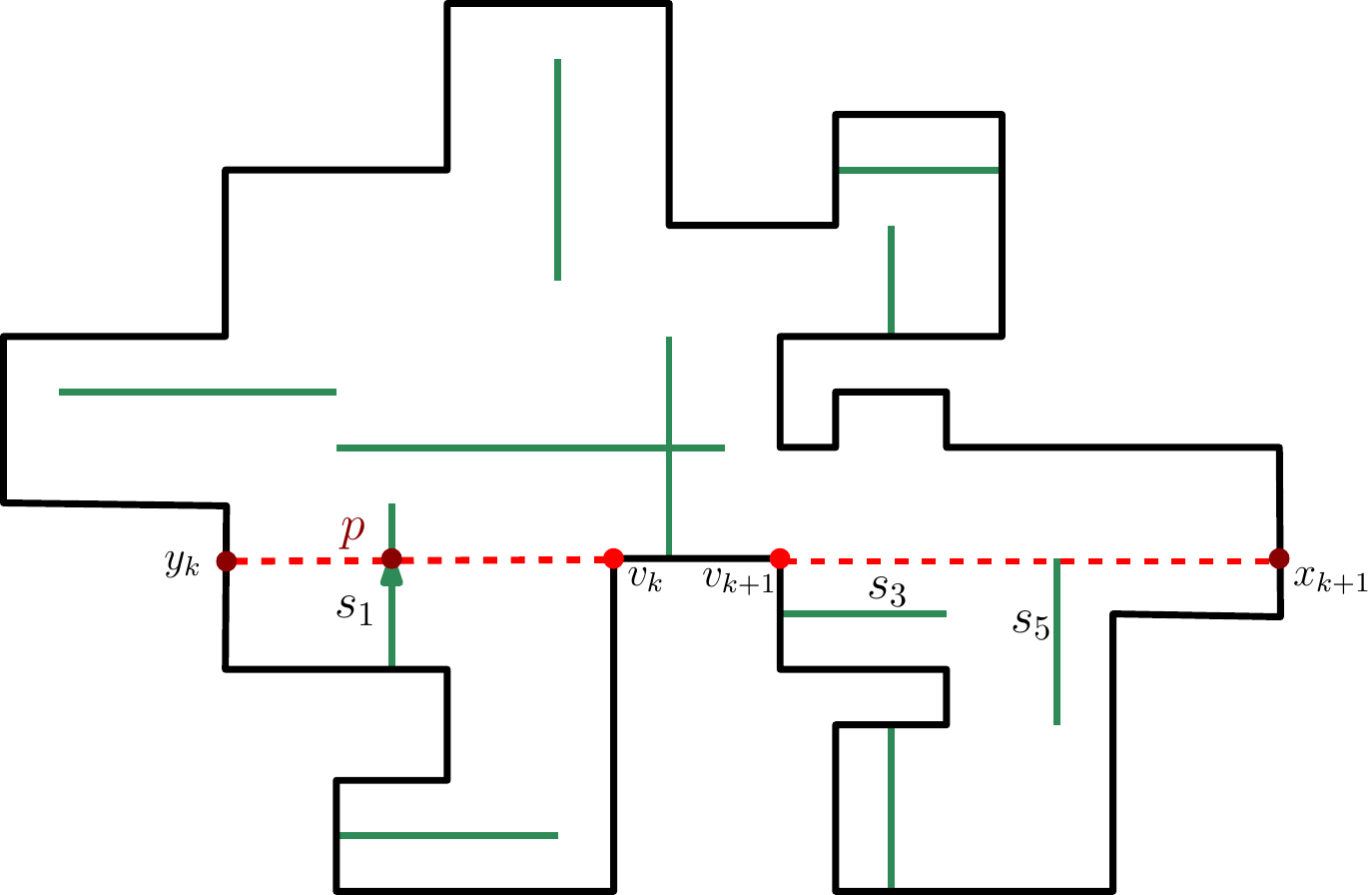}
 \caption{\small{When $r_1$ reaches $p$ and $FF_{k+1}(1)=false$, $r_3$ moves along $s_3$ from its left endpoint (or $r_5$ moves along $s_5$ from its upper endpoint).}}
  \label{p2}
\end{figure}
\\
Suppose that $v_k$ is a reflex vertex and that $FF_k(x)=false$ (let $P_1$ be the corresponding sub-polygon of $FF_k(x)$); suppose also that $r_j$ is a robot that is waiting until $P_1$ becomes cleared. At the time that $r_i$ updates $FF_k(x)$ to $true$, 
$r_i$ finishes its clearance and updates $D_j(1)=D_j(1)\bigcup D_i(1)$ and $D_j(2)=D_j(2)\setminus D_i(1)$. Then, $r_j$ continues its movement.
\\
\\
\textbf{ Termination Step Algorithm}\\
We assume that, initially, all parts of $P$ are contaminated and $\forall_{r_i\in R} D_i(1)=\emptyset$. Because of our algorithm, a robot can move and clear some parts of $P$ at any time. 
When there is no waiting robot ($\forall_{r_i\in R} D_i(3)=\emptyset$), all robots have cleared their corresponding sub-polygons ($\forall_{r_i\in R} D_i(2)=\emptyset$), and all parts of $P$ have been cleared ($ \bigcup_{i=1}^{|R|} D_i(1)=P$),  the motion-planning algorithm is finished.

\section{Analysis}
In this section, we shall prove that the proposed algorithm is deadlock free. Since $S$ guards all parts of $P$, then the algorithm will be terminated. Then, we will show that, starting with any arbitrary sliding robot, the algorithm can clear $P$ completely.
\begin{lemma}
\label{lm:deadlock-free}
The proposed algorithm is deadlock free.
\end{lemma}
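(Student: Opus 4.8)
The plan is to argue by contradiction: suppose the algorithm deadlocks, meaning that there is a nonempty set $W\subseteq R$ of robots that are all simultaneously waiting forever (each with $D_i(3)\neq\emptyset$), and no robot outside $W$ is moving either, yet $P$ is not fully cleared. Each waiting robot $r_i\in W$ is blocked because it has sent a signal to clear some sub-polygon $D_i(3)=(a_i,b_i)$ that is still contaminated, and the robot $r_j$ that ought to clear the portion of $D_i(3)$ containing $a_i$ is itself either waiting (so $r_j\in W$) or has already done its part but is further blocked. So I would build a ``waits-for'' relation on $W$: $r_i \to r_j$ if $r_i$ is waiting on a sub-polygon whose clearing is (currently) the responsibility of $r_j$. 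Since $W$ is finite and every robot in $W$ waits on someone, this relation must contain a cycle $r_{i_1}\to r_{i_2}\to\cdots\to r_{i_m}\to r_{i_1}$.

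The core of the proof is then to derive a contradiction from such a cycle, and this is where I expect the real work to be. The key geometric fact I would exploit is how the windows $win_k(\cdot,\cdot)$ and the sub-polygons $P_k(\cdot,\cdot)$, $P'_k(\cdot,\cdot)$ nest: each signal $D_i(3)$ that $r_i$ emits corresponds, by inspection of the decision step, to a sub-polygon cut off by one or two windows of reflex vertices that $r_i$ currently sees, and that sub-polygon lies strictly on the far side of $r_i$'s trajectory $s_i$ (the side not containing $s_i$). I would argue that along a waits-for cycle the requested sub-polygons are strictly decreasing with respect to containment — i.e., $D_{i_2}(3)\subsetneq D_{i_1}(3)$, and so on around the cycle — because when $r_{i_2}$ picks up the task $D_{i_1}(3)$ it sets $D_{i_2}(2)=D_{i_1}(3)$, starts clearing from the endpoint $a_{i_1}$, and only re-emits a signal $D_{i_2}(3)$ for a yet-unreached reflex vertex it encounters while clearing, whose window cuts off a proper sub-region of $D_{i_1}(3)$. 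A strictly decreasing chain of sub-polygons in a containment order on a polygon of bounded combinatorial complexity cannot close up into a cycle, giving the contradiction; equivalently one assigns to each sub-polygon a potential (e.g. the number of reflex vertices it contains, or its boundary length measured in window-segments) that strictly drops along each waits-for edge.

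I would also need to handle the degenerate ways a robot can be ``waiting'' that do not arise from the decision step proper — in particular the start step, where a robot whose starting endpoint is not on the boundary immediately waits and emits $D_i(3)=(x,w)$. The same nesting argument applies: that $(x,w)$ is cut off by the maximal normal segment through $r_i$ and lies on the far side of $s_i$, so it too fits into the containment/potential framework. Finally I would note that the base of every waits-for chain must bottom out at a sub-polygon so small (containing no reflex vertex in its interior, bounded by at most two windows and pieces of $\partial P$) that some robot of $S$ sees it entirely and can clear it outright without itself waiting — here I invoke the standing hypothesis that the union of the sliding-visibility polygons of $S$ is all of $P$, so the minimal contaminated region is covered by some $s_j$ that can sweep it without emitting a new signal. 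Thus no infinite or cyclic wait is possible, and the algorithm is deadlock free.

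The main obstacle, I expect, is making the ``strictly decreasing sub-polygon'' claim fully rigorous: one must check, case by case against every branch of the decision step and the waiting-and-sending-a-signal step, that whenever robot $r_j$ services $r_i$'s signal and then itself waits, the new requested region $D_j(3)$ is properly contained in $D_i(3)$ (or at least has strictly smaller potential), and in particular that the endpoint matching ``$r_j$ sees $a_i$ and starts from there'' never lets the region fail to shrink. Getting the potential function right so that it is monotone across all these cases — including the cases where a robot absorbs a cleared region and continues rather than waiting — is the crux.
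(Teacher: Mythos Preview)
Your proposal is correct and follows essentially the same idea as the paper: a waiting robot $r_i$ sits outside the sub-polygon $D_i(3)$ it requests (seeing only its bounding window), so the sub-polygons requested along any waits-for chain are strictly nested, and hence no cycle can form. The paper compresses this into a two-sentence sketch (``$r_i$ cannot see any points of $P_i$ except its window, and the sub-polygons of the robots in $r_{seq}(i)$ lie inside $P_i$, so none of them can wait for $r_i$''), whereas you spell out the finite-graph cycle argument, introduce an explicit potential, and add the bottom-out observation; the latter is not strictly needed for deadlock-freeness here (finitely many robots plus acyclicity already gives a terminating chain, and coverage by $S$ is invoked separately in the paper's next lemma), but it does no harm.
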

\begin{proof}
Assume that $r_i$ is waiting for sub-polygon $P_i$ to be cleared by a sequence of robots. Inside $P_i$, $r_j$ may be waiting for sub-polygon $P_j$ to be cleared. Therefore, there may exist a chain of waiting robots, say, $r_{seq}(i)=<r_j,r_t, \ldots , r_m>$, for clearing $P_i$. If $r_i \in r_{seq}(i)$, a deadlock occurs and the algorithm will not get terminated. Therefore, we shall show that the relation $r_i \in r_{seq}(i)$ will never become valid. \\
Owing to the definition of the window and its corresponding sub-polygons, when $r_i$ waits for the clearance of $P_i$, it cannot see any points of $P_i$, except its window. Since the sub-polygons corresponding to the other robots in $r_{seq}(i)$ are inside $P_i$, none of the waiting robots in $r_{seq}(i)$ can wait for $r_i$. Hence, the algorithm is deadlock free.


\end{proof}

\begin{lemma}
\label{lm:arbitraryStarting}
A simple orthogonal polygon can be completely cleared starting with an arbitrary sliding robot.
\end{lemma}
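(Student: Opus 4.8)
The plan is to show that, regardless of which robot $r_i \in R$ we pick in the start step, the algorithm eventually reaches the termination condition with $\bigcup_{i=1}^{|R|} D_i(1) = P$. The proof naturally combines two ingredients: (1) the algorithm never gets stuck (Lemma~\ref{lm:deadlock-free} already gives deadlock-freeness), and (2) as long as $P$ is not yet fully cleared, some robot is able to make progress. Since $S$ guards all of $P$, the second ingredient is what ultimately forces termination, and the bulk of the argument is establishing it carefully.

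First I would set up an invariant characterizing the ``clearing frontier'' at any moment of the execution. Let $\mathcal{C} = \bigcup_i D_i(1)$ be the currently cleared region; I claim that at every event point its boundary inside $P$ consists of windows of reflex vertices (or maximal normal segments through endpoints of the $s_i$'s), i.e., the frontier is always a union of the separating segments in $L$ together with the normal segments introduced in the start/decision steps. This follows by induction on event points: the start step initializes $D_i(1)$ either to a boundary interval $(v_k,v_{k+1})$ between two convex vertices (frontier empty) or to an interval bounded by a normal segment; and every update rule in the move-back-and-update step and decision step either extends $D_i(1)$ across exactly one window (when the relevant $FF_k(\cdot)$ flag flips to true) or merges it with another robot's $D_i(1)$ along a shared separating segment. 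So the frontier is always ``clean'' in the sense needed to invoke the windows machinery of Section~\ref{PandN}.

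Next I would argue monotonic progress: $\mathcal{C}$ never shrinks, and whenever $\mathcal{C} \neq P$ there is at least one robot that is neither waiting nor idle, and this robot strictly enlarges $\mathcal{C}$ before the next time the system could become globally stuck. The key point is that when a robot $r_i$ reaches an event point adjacent to a contaminated sub-polygon, it issues a signal (the sending-a-signal step), and because $S$ guards all of $P$, some $s_j$ covers a portion of that sub-polygon incident to the window, so $r_j$ can start clearing it; the deadlock-freeness lemma guarantees the chain $r_{seq}$ of such delegations is finite and terminates with a robot that actually moves. When that innermost robot finishes, it flips the corresponding $FF_k(\cdot)$ flag, which in turn un-blocks its parent in the chain via the update $D_j(1) \leftarrow D_j(1)\cup D_i(1)$, and so on up the chain. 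Thus each ``stall'' is resolved by a strict increase of some $FF_k(\cdot)$ entry from false to true. Since there are only $4|V_{ref}(P)|$ such Boolean entries plus finitely many endpoints of segments in $S$, only finitely many event points occur, so the algorithm terminates; and at termination the conditions $\forall r_i\, D_i(3)=\emptyset$, $\forall r_i\, D_i(2)=\emptyset$ force $\bigcup_i D_i(1)=P$, since any point of $P$ lies in some rectangle of the partition induced by $L$ whose bounding windows all carry true flags.

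The main obstacle I anticipate is the progress/liveness step, specifically proving that every contaminated region adjacent to the current frontier can in fact be attacked by \emph{some} robot — i.e., translating ``$S$ guards $P$'' into ``the particular sub-polygon named in a signal contains part of some $s_j$ touching the advertised window endpoint $a$.'' One has to check that the sub-polygons named in the decision step ($P_k(k+1,k)$, $P'_{k+1}(k,k+1)\cap P'_k(k+1,k)$, etc.) are exactly the maximal pieces of $P$ cut off by the frontier, so that covering $P$ by sliding cameras guarantees a camera reaching across the relevant window; and one must verify that the handoff bookkeeping ($D_j(2)\leftarrow D_i(3)$, then later $D_j(1)\cup D_i(1)$) composes correctly along an arbitrarily long chain without ever re-contaminating a cleared region or losing track of a piece. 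I would handle this by a careful case analysis mirroring the decision-step cases, using the invariant from the first paragraph to ensure the frontier is always a union of windows so that the sub-polygon decomposition of Section~\ref{PandN} applies verbatim.
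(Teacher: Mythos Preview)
Your proposal is correct and follows the same overall strategy as the paper: invoke Lemma~\ref{lm:deadlock-free} for deadlock-freeness, use the hypothesis that $S$ guards all of $P$ to force progress and hence termination, and read off $\bigcup_i D_i(1)=P$ from the termination condition. Your argument is in fact considerably more detailed than the paper's three-sentence proof, which simply asserts (without your frontier invariant, delegation-chain analysis, or finiteness bookkeeping via the $FF_k$ flags) that the termination step will be reached.
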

\begin{proof}
Assume that we start with an arbitrary robot $r_i$.  Because of Lemma \ref{lm:deadlock-free}, the proposed algorithm is deadlock free. Moreover, since $S$ guards all parts of $P$,  the termination step will happen. Based on the termination step, the relation $\bigcup_{i=1}^{|S|} D_i(1)=P$ becomes valid; therefore, there is no contaminated point in $P$ and the polygon gets cleared completely.

%

\end{proof}
\begin{theorem}
Let $P$ be a simple orthogonal polygon consisting of unpredictable evaders, and let $S$ be a set of line segments such that the union of their sliding visibility polygons is $P$. We can propose a motion-planning algorithm for a group of sliding robots that move along the line segments of $S$ and find all evaders such that the number of sliding robots used is at most the cardinality of $S$.
\end{theorem}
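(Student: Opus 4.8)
The plan is to assemble the theorem directly from the two lemmas together with the correctness of the individual steps of the algorithm described in Section~\ref{algo}. The statement asks for three things: (i) the existence of the motion-planning algorithm, (ii) that it detects every evader (clears $P$), and (iii) that the number of robots used is at most $|S|$. Item (i) is furnished by the explicit five-step procedure of Section~\ref{algo}, so the work is in (ii) and (iii).

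For (ii), the clearing guarantee, I would argue as follows. First, invoke Lemma~\ref{lm:deadlock-free} to ensure that no chain of waiting robots can close into a cycle, so the algorithm never stalls. Next, I would establish the key monotonicity invariant that makes ``clearing'' meaningful against unboundedly fast evaders: at every event point, the region $\bigcup_i D_i(1)$ is a union of sub-polygons each bounded by windows and boundary pieces, the common boundary between the cleared region and the contaminated region always lies along windows or along a segment normal to some $s_i$ currently occupied by a robot at the relevant point, and no evader can cross this frontier without being seen. This is exactly the ``recontamination-free'' style invariant of pursuit-evasion: I would verify it by going step by step through the Move Back-and-Update Step and the Decision Step, checking that each update of $D_i(1)$ and of the $FF_k(\cdot)$ flags corresponds to a genuine enlargement of a provably cleared region whose frontier is guarded at the moment of the update. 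Combined with the fact that $S$ guards all of $P$ (so every sub-polygon that can become a ``still-contaminated'' piece is eventually the target of a signal that some robot can service), the termination condition $\bigcup_{i=1}^{|S|} D_i(1)=P$ is reached; this is precisely Lemma~\ref{lm:arbitraryStarting}, applied with an arbitrary starting robot. At that point there is no contaminated point, hence no evader can remain undetected.

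For (iii), the bound on the number of robots, the observation is that the robot set is fixed to $R=\{r_1,\dots,r_k\}$ with $k=|S|$, one robot per segment of $S$, and the algorithm never creates a new robot: a signal only ever activates an already-existing robot $r_j$ (the one that can service the requested sub-polygon from its segment $s_j$), and each $r_i$ simply moves back and forth along its own $s_i$. Hence the number of robots used is at most $|S|$; and in the special case where $S$ is a minimum-cardinality set of sliding cameras guarding $P$, this matches the lower bound of $|S|$, giving the ``minimum number of sliding robots'' claim mentioned in the introduction.

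The main obstacle is the frontier-guarding invariant in (ii): one must check that at each event point the boundary separating $\bigcup_i D_i(1)$ from the rest of $P$ is, in its entirety, simultaneously watched — every window on this frontier is collinear with, and currently swept-past by, the robot responsible for the adjacent cleared piece, and every non-window portion is a normal segment through a stationary robot. The subtle cases are when a robot finishes ($D_i(2)=\emptyset$) and hands control back to a waiting robot $r_j$ via the update $D_j(1)=D_j(1)\cup D_i(1)$, and when two consecutive reflex vertices lie on a single window line $l(j)\in L$ and the cleared region is enlarged across both windows at once; here I would rely on the general-position assumption (no four reflex vertices collinear) and a careful case analysis mirroring the Decision Step to show the frontier is never left unguarded, so recontamination is impossible.
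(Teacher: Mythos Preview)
Your proposal is correct and follows essentially the same route as the paper: the theorem is obtained from the explicit algorithm of Section~\ref{algo} together with Lemma~\ref{lm:deadlock-free} and Lemma~\ref{lm:arbitraryStarting}, plus the observation that $R=\{r_1,\dots,r_k\}$ with $k=|S|$ bounds the number of robots. In fact the paper gives no separate proof of the theorem at all, treating it as an immediate consequence of the two lemmas; your additional frontier-guarding (no-recontamination) invariant is more than the paper itself supplies.
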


\begin{corollary}
If $S$ is the set of minimum cardinality sliding cameras that guard the whole $P$, then our algorithm clears $P$ with the minimum number of sliding robots.
\end{corollary}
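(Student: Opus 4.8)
The plan is to derive this corollary from the Theorem together with a matching lower bound on the number of sliding robots that any clearing strategy must use. The Theorem already supplies the upper bound: applied to the set $S$, our algorithm clears $P$ using at most $|S|$ robots. It therefore remains to argue that \emph{no} sliding-robot strategy can clear $P$ with fewer than $|S|$ robots when $S$ is a minimum cardinality sliding camera set (MCSC).

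For the lower bound, I would first observe that coverage is a necessary condition for clearing. Consider any collection of sliding robots whose trajectories are the axis-aligned segments $T=\{t_1,\dots,t_m\}$, and let $\mathrm{SVP}(t)$ denote the sliding visibility polygon of a segment $t$. The key point is that a robot moving along $t$ can see a point $p$ at some instant only if $p\in\mathrm{SVP}(t)$: seeing $p$ requires the robot to occupy a position $q\in t$ with $\overline{pq}$ perpendicular to $t$ and completely inside $P$, and the existence of such a $q$ is exactly the defining condition for membership in $\mathrm{SVP}(t)$. Consequently, if some point $p\in P$ lies in none of the $\mathrm{SVP}(t_i)$, then no robot ever sees $p$ at any time during the motion. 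Since the evaders are unpredictable, have unknown initial positions, and move arbitrarily fast, an adversary could place an evader at such a $p$ and keep it there, and it would never be detected. Hence any successful clearing strategy must satisfy $\bigcup_{i=1}^{m}\mathrm{SVP}(t_i)=P$.

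Next I would connect this necessary condition to the guarding problem. The equality $\bigcup_{i=1}^{m}\mathrm{SVP}(t_i)=P$ states precisely that $T$ is a set of sliding cameras guarding all of $P$. Therefore $m\ge\text{MCSC}(P)$, the minimum number of sliding cameras needed to guard $P$. Because $S$ is assumed to be a minimum cardinality sliding camera set, we have $|S|=\text{MCSC}(P)$, so every sliding-robot clearing strategy uses at least $|S|$ robots. Combining this lower bound with the upper bound of the Theorem shows that our algorithm, run on $S$, clears $P$ with exactly $|S|$ robots, which is the minimum possible.

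The main obstacle is making the lower-bound (adversary) argument fully rigorous, rather than the surrounding bookkeeping. One must verify that the sliding visibility polygon of a segment genuinely equals the set of points the robot is ever able to see as it slides, so that a point outside the union of these polygons is permanently invisible to the entire team; this rests on the fact that $\mathrm{SVP}(t)$ is a static property of the segment and does not grow with the robot's motion or with the cooperation of other robots. It is also worth pinning down that the quantity being minimized over is the number of sliding robots across all choices of trajectory segments and motion schedules, so that the comparison with $\text{MCSC}(P)$ is the right one. Once this is fixed, the equivalence between ``clearing requires coverage'' and ``sliding-camera guarding'' closes the argument.
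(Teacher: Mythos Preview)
Your argument is correct. The paper itself states the corollary with no proof at all; it is asserted as an immediate consequence of the preceding Theorem. What you supply is strictly more than the paper does: you make explicit the matching lower bound (clearing requires the union of the robots' sliding visibility polygons to cover $P$, hence any successful strategy uses at least $\text{MCSC}(P)$ robots), and then combine it with the Theorem's upper bound. This is the natural way to justify the claim of minimality, and the paper leaves it implicit. In that sense your route is not different from the paper's so much as it is a fully worked-out version of a step the authors evidently regarded as obvious.
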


\section{Conclusion}
In this paper, we have proved that, in the case of having a known environment for sliding robots, there exists an algorithm for planning the motions of a group of sliding robots to detect all the unpredictable moving evaders that have unbounded speed. We assume that the speed of the sliding robots is unbounded ($\neq \infty$). We use a set of line segments $S$ where the sliding robots move along. In the case where $S$ is a set of minimum-cardinality sliding cameras that guard $P$, the proposed algorithm uses the minimum number of sliding robots to clear $P$. 

Investigating the problem in which the environment is unknown to the robots, and in which the robots could only plan their motions based on the local visible area, would be challenging. 
Additionally, letting the robots send information only to those that are visible to them may make the problem more usable in real-life multi-robot systems.




\bibliographystyle{abbrv}
\bibliography{paper}

\end{document}